\newcommand{\eps}{\varepsilon}
\renewcommand{\sc}{\{ c \}}
\newcommand{\RG}{\ensuremath{G^{Red}}}
\newcommand{\RV}{\ensuremath{V^{Red}}}
\newcommand{\RE}{\ensuremath{E^{Red}}}
\newcommand{\FFI}{\textsc{2-Free-Flood-It }}
\newcommand{\ie}{\emph{i.e. \ }}
\newtheorem{thm}{Theorem}
\newtheorem{lemme}{Lemma}
\newtheorem{cor}{Corollary}
\title{\FFI is polynomial}
\author{Aurélie Lagoutte\\ LIP, ENS Lyon}
\date{\today}
\begin{document}

\maketitle

\begin{abstract}
We study a discrete diffusion process introduced in some combinatorial games called FLOODIT and MADVIRUS that can be played online~\cite{jeuFI,jeuMV} and whose computational complexity has been recently studied~\cite{Arthur,Anglais}. 

The flooding dynamics used in those games can be defined for any colored graph. It is shown in~\cite{Aurelie-hal} that studying this dynamics directly on general graph is a valuable approach to understand its specificities and extract uncluttered key patterns or algorithms that can be applied with success to particular cases like the square grid of FLOODIT or the hexagonal grid of MADVIRUS, and many other classes of graphs.   

This report is the translation from french to english of the section in~\cite{Aurelie-hal} showing that the variant of the problem called \FFI can be solved with a polynomial algorithm, answering a question raised in~\cite{Arthur,Anglais}.
\end{abstract}

\section{Definitions and notation} 
\label{intro}

Let $G=(V,E)$ be a connected undirected graph, with vertices $V$ and edges $E$. The number of vertices (resp. edges) will be denoted $n$ (resp. $m$). A {\em coloration} of~$G$ is a mapping from~$V$ into a set of colors~$C$. It will be called a {\em $c$-coloration} if $|C|=c$. It will be called a {\em proper coloration} if adjacent vertices have different colors. Once a coloration of~$G$ is given, a {\em zone} $Z$ is defined as a connected monochromatic subset of~$V$. The dynamics which is studied consists in applying a sequence of {\em flooding operations} to an initial colored graph. In the FREE-FLOOD-IT version~\cite{Arthur,Anglais} that is studied here, a {\em flooding operation} consists in choosing a zone $Z$ and a color~$c$ and then replacing the color of all vertices in~$Z$ by~$c$. It yields a new coloration of the initial graph where the zone~$Z$ may extend if some adjacent zones were colored by~$c$. The game associated to this dynamics takes a colored graph as input and aims at finding the shortest length of a sequence leading to a monochromatic graph. At that time, we say that we have {\em flooded the whole graph}. Finding this shortest length is an optimization problem, which is hard for general graphs when working with 3 or more colors (NP-hardness results in~\cite{Arthur,Anglais}). But for 2 colors, the optimization problem that will be called \FFI has a computational complexity which remained open (the question was raised in~\cite{Arthur,Anglais} for the class of square grids $N \times N$). We show in this report that there exists a polynomial algorithm to solve this problem for general graphs (and thus for square grids). We need some further definitions and notation.

Let $G=(V,E)$ be a connected undirected graph and a $c$-coloration of this graph. We define its {\em reduced graph}, denoted $\RG=(\RV,\RE)$, as the connected undirected graph where $\RV$ is the set of all the zones of~$G$ and $\RE$ puts an edge between two zones of~$G$ if there exists an edge with extremities in both zones. Note that this reduced graph is clearly smaller than $G$: it is a minor of $G$, and $|\RV| \leq [V|$ and $|\RE| \leq |E|$. By associating to each zone its color in~$G$, one gets a $c$-coloration of the reduced graph and this is a proper coloration. It should be clear that for the flooding process, working with the initial graph or its reduced version is {\em perfectly equivalent}. Moreover it is easy to find a linear algorithm that computes the reduced graph of a colored graph~\cite{Aurelie-hal}.    

From now on, we will focus on \FFI where we only consider 2-colorations. We also decide to work exclusively with the reduced graphs (note that with proper 2-colorations, those reduced graphs are bipartite). At each flooding step, we start from a reduced graph $G=(V,E)$ and its proper 2-coloration. Then the flooding operation just consists in choosing a vertex $x$ (since we only have 2 colors, the choice of the flooding color is imposed if we want to modify the coloration). We perform the flooding: here the zones are reduced to the single vertices and we just change the color of vertex~$x$. Since we start from a proper 2-coloration, the vertex~$x$ has now the same color as all its neighbors $N(x)=\{y \in V | xy \in E\}$. We can compute the new reduced graph: it is obtained by contracting $x$ and its neighbors into one single vertex (that we still denote~$x$). More precisely, if we start from the graph $G$ (as a matter of fact in our context $G$ is bipartite and there is no need to specify the coloration since there are only two identical proper 2-colorations obtained by switching colors) and we decide to perform a flooding operation at vertex~$x$, then the new reduced graph obtained after the flooding operation is $G/x=(V/x, E/x)$ where $V/x=V \setminus N(x)$ and  $E/x=(E \setminus \{xy \in E | y \in N(x) \}) \cup \{xz \ | \ \exists y \in N(x) \quad yz \in E \}$. We call this transformation of~$G$ into~$G/x$ a {\em neighborhood contraction}, it is a particular case of edge contractions well-known in the definition of graph minors. For any path $\gamma$ in~$G$, we will denote $\gamma /x$ the path in~$G/x$ obtained after the contraction around~$x$, that is by replacing any occurrence of neighbors of~$x$ by the vertex~$x$ itself. Note that any path in~$G/x$ can be described as $\gamma/x$ for at least one path $\gamma$ in~$G$.   

Finally we give back a few classical definitions about distances in undirected graphs. Let $G=(V,E)$ be a connected undirected graph, we denote $d(x,y)$ the classical distance between vertices~$x$ and~$y$. For any vertex $x \in V$, the {\em eccentricity} of~$x$ is $r(x)=\max \{d(x,y) \ | y\in V\}$. The {\em radius} of~$G$ is  $R(G)=\min \{r(x) \ | x \in V\}$. The {\em center} of~$G$ is $C(G)=\{ x \in V \ | r(x)=R(G)\}$. It is well-known that for any connected graph given by its adjacency lists, computing the eccentricity of a vertex~$x$ can be done in~$O(m)$ time by performing a BFS from~$x$. Consequently its radius and its center can be computed in~$O(nm)$ time.  

\section{A simple formula for \FFI}

With this approach of systematically reducing the graph after each flooding, solving a \FFI instance, \ie finding a shortest sequence of flooding operations which floods the whole graph, comes to finding a shortest sequence of vertices such that the corresponding neighborhood contractions reduce the graph to a single vertex. We are going to show that this minimum number of steps is exactly the radius of the initial reduced graph. But first it requires a few results to study how the radius behaves with regard to neighborhood contractions. Note that in the case of \FFI the initial reduced graph is bipartite, but all our next lemmas and theorems apply to arbitrary graphs. 

\begin{lemme}\label{autre sommet loin}
Let $G=(V,E)$ be a connected graph. Let $c \in C(G)$, $y\in V$ such that $d(c,y)=R(G)$ and let $\gamma=c a_1\ldots a_{r-1}y$ be a shortest path from $c$ to $y$ (\ie $r=R(G)$). Then\\
\begin{enumerate}[(i).]
\item \label{prop pas assez precise} There exists $z \neq y$ such that for any shortest path $\mu$ from $c$ to $z$, $\gamma \cap \mu=\sc$ and $R(G)-1\leq d(c,z)\leq R(G)$.
\item More precisely: either there exists $z \in V$ satisfying all conditions of (\ref{prop pas assez precise}) and $d(c,z)=R(G)$, or for all $z \in V$ satisfying all conditions of (\ref{prop pas assez precise}), we have $d(c,z)=R(G)-1$ and there exists some vertex $z_0$ among those vertices such that for any path $\mu$ from $c$ to $z_0$, of length $d(c,z_0)+1$, we have $\gamma \cap \mu = \sc$.
\end{enumerate}
\end{lemme}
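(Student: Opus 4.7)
The strategy is to take $z$ (and later $z_0$) to be any vertex realizing the eccentricity of $a_1$, the first vertex of $\gamma$ after $c$. The single lever driving the whole argument is that, since $c \in C(G)$ and $a_1 \neq c$, necessarily $r(a_1) \geq R(G) = r$, so such a $z$ automatically satisfies $d(a_1,z) \geq r$.

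For part (i), the upper bound $d(c,z) \leq r$ is immediate from $c \in C(G)$, and the lower bound $d(c,z) \geq r-1$ follows from $d(a_1,z) \leq 1 + d(c,z)$. The inequality $z \neq y$ is forced because $d(a_1,y) = r-1 < r \leq d(a_1,z)$. To show that every shortest $c$--$z$ path $\mu$ meets $\gamma$ only at $c$, I assume it passes through some $a_i$ with $i \geq 1$; since $\mu$ is shortest, $d(c,z) = i + d(a_i,z)$, hence $d(a_i,z) \leq r - i$, and concatenating with the $\gamma$-subpath from $a_1$ to $a_i$ gives $d(a_1,z) \leq (i-1) + (r-i) = r-1$, contradicting $d(a_1,z) \geq r$.

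For part (ii), I split on the value of $d(c,z)$ for the witness just constructed. If $d(c,z) = r$, the first alternative of (ii) holds. Otherwise $d(c,z) = r-1$, and since the first alternative is excluded, every $z'$ satisfying (i) also has $d(c,z') = r-1$ by the bound in (i); I then take $z_0 = z$. It remains to check that any $c$--$z_0$ path $\mu$ of length $d(c,z_0) + 1 = r$ still avoids $\gamma \setminus \{c\}$. The key point is that the earlier contradiction never used that $\mu$ was shortest, only that $|\mu| \leq r$: if such a $\mu$ meets $\gamma$ at some $a_i$ with $i \geq 1$, splitting $\mu$ at $a_i$ gives pieces of lengths $\ell_1 \geq d(c,a_i) = i$ and $\ell_2 \geq d(a_i,z_0)$ summing to $r$, whence $d(a_i,z_0) \leq r - i$ and the same triangle inequality yields $d(a_1,z_0) \leq r-1$, a contradiction.

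The argument is essentially mechanical once the witness is chosen, so the only genuine step is recognizing that a vertex realizing $r(a_1)$ is the correct candidate; the defining property of the center then does all the work. The mild subtlety in part (ii) is noticing that the contradiction in (i) is length-based rather than shortest-path-based, which is exactly what lets the forbidden family be enlarged from shortest paths to arbitrary paths of length $d(c,z_0)+1$ in the case $d(c,z_0) = r-1$.
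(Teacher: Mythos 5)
Your proof is correct, and it rests on exactly the same lever as the paper's --- namely that $R(G)$ is the \emph{minimum} eccentricity, so $r(a_1)\geq R(G)$, and a $c$--$z$ path of length at most $r$ passing through some $a_i$ forces $d(a_1,z)\leq (i-1)+(r-i)=r-1$. The difference is purely in the logical packaging, and yours is the cleaner one. The paper uses the lever in contrapositive form: it assumes no valid $z$ exists, enumerates all candidate vertices $z_1,\ldots,z_n$ together with offending paths $\mu_i$, bounds $d(a_1,\cdot)$ for every vertex of $G$, and concludes $r(a_1)=R(G)-1$, contradicting the minimality of $R(G)$; it then runs a second contradiction of the same shape for part (ii). Naming the witness up front as a vertex realizing $r(a_1)$ removes all of that bookkeeping, and --- as you observe --- makes it visible that the key inequality is length-based rather than shortest-path-based, so the very same $z_0$ serves in both (i) and (ii) with no extra work (the paper instead has to argue that \emph{some} $z_i$ among the candidates works). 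The only point worth stating explicitly is that your index $i$ should be allowed to reach the endpoint $y=a_r$ of $\gamma$, since a path $\mu$ could meet $\gamma$ at $y$ itself; the same computation then gives $d(y,z)\leq 0$, i.e.\ $z=y$, which you have already excluded, so that case causes no trouble.
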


\begin{proof}\label{preuve_lemme_autre_sommet_loin}
\begin{enumerate}[(i).]

\item Let us first show that there exists some $z \in V$ different from $y$ such that $R(G)-1\leq d(c,z)\leq R(G)$.

Assume that it is not true, then for all $z \in V$ different from $y$, we have $d(c,z)\leq R(G)-2$. Then $d(a_1,y)=R(G)-1$. Moreover, for all $z\in V$ different from $y$, $d(a_1,z)\leq d(a_1,c)+d(c,z)\leq 1+R(G)-2 = R(G)-1$. Thus, $r(a_1)=R(G)-1$. It yields a contradiction with $R(G)=\min \{r(x)| x \in V\}$.

Now let us show property (\ref{prop pas assez precise}). Once again assume that it is not true. We have shown just before that there exists some $z \in V$ satisfying $R(G)-1\leq d(c,z)\leq R(G)$. Let us denote $z_1,\ldots,z_n$ all the vertices different from $y$ satisfying the inequalities. Then for all $i\in  \{1,\ldots,n\}$, there exists some shortest path $\mu_i=c b_{i,1} \ldots b_{i,r-\eps_i} z_i$ (with $\eps_i \in \{1;2\}$) from $c$ to $z_i$ such that $\gamma \cap \mu_i \neq \sc$. It implies that for all $i\in  \{1,\ldots,n\}$ there exist $k_i,\ j_i \in \{1,\ldots,r\}$ such that $a_{k_i}=b_{j_i}$. Since $\gamma$ and $\mu$ are shortest paths, we have $k_i=j_i$. Consequently 
\begin{center}
$\begin{array}{lr@{\ \leq \ }l}
 \forall i\in  \{1,\ldots,n\} \qquad & d(a_1,z_i) & |a_1 \ldots a_{k_i} b_{i,k_i+1} \ldots
b_{i,r-\eps_i} z_i| \\
&& r-\eps_i \\
&& R(G)-1\\
\end{array}$
\end{center}
Moreover for any vertex $x \in V \setminus \{z_1,\ldots,z_n,y\}$,
\begin{center}
 $\begin{array}{r@{\ \leq \ }l}
  d(a_1,x) & d(a_1,c)+d(c,x)\\
 & 1+R(G)-2\\
 & R(G)-1\\
 \end{array}$
\end{center}
We also have $d(a_1,y)=|a_1 \ldots a_{r-1} y|=r-1=R(G)-1$.

Finally it implies $r(a_1)=R(G)-1$, which yields a contradiction with $R(G)=\min \{r(x)| x \in V\}$.

\item Let $z_1,\ldots, z_n$ be all the vertices satisfying all the conditions of (\ref{prop pas assez precise}). Assume that for all $i$, $d(c,z_i)=R(G)-1$. Now we have to show that there exists some $i\in \{1, \ldots n\}$, such that any path $\mu_i$ from $c$ to $z_i$ of length $d(c,z_i)+1$ satisfies $\mu_i \cap \gamma=\sc$. 

Assume that it is not true, then for all $i$, there exists a path $\mu_i$ from $c$ to $z_i$, of length $d(c,z_i)+1$ such that $\mu_i \cap \gamma \neq \sc$. Let us denote $\mu_i=c b_{i,1} \ldots b_{i,r-1} z_i$. Then for all $i$, there exists some $k_i$ such that $a_{k_i}=b_{i,k_i}$ or $a_{k_i}=b_{i,k_i+1}$. Then
\begin{itemize}
 \item $d(a_1,y)=R(G)-1$
 \item for all $i$, we have $d(a_1,z_i)= |a_1 \ldots a_{k_i} b_{i,k_i+\eps_i} \ldots b_{i,r-1} z_i|$ where $\eps_i \in \{1;2\}$ thus $d(a_1,z_i) \leq R(G)-1$.
 \item for all $x$ different from the vertices $z_i$ satisfying the inequalities $R(G)-1\leq d(c,x) \leq R(G)$, there exists a shortest path $\delta=c d_1 \ldots d_{r-\eps} x$ (with $\eps \in \{1,2\}$) from $c$ to $x$ such that $\delta \cap \gamma \neq \sc$. Then there exist $k_i,\ j_i$ such that $a_{k_i}=b_{j_i}$. Since both $\gamma$ and $\delta$ are shortest path, we have $k_i=j_i$. Thus we have $d(a_1,x) \leq |a_1 \ldots a_{k_i} b_{i,k_i+1} \ldots b_{i,r-\eps} z_i|$ which implies $d(a_1,x) \leq r-\eps \leq R(G)-1$.
 \item for all $x$ such that $d(c,x)\leq R(G)-2$ (it covers all the other cases), we have: $d(a_1,x) \leq d(a_1,c) +d(c,x) \leq R(G)-1$.
\end{itemize}
Finally $r(a_1)=R(G)-1<R(G)$, which yields a contradiction once again.
\end{enumerate}
\end{proof}

\begin{lemme}\label{longueur chemin}
Let $a,b,x$ be three vertices of $G$ a connected graph.
\begin{enumerate}[(i).]
\item Assume that there exists no shortest path from $a$ to $b$ using vertex $x$. Then
\begin{center}
 $d_{G/x}(a,b)\geq d_G(a,b)-1$
\end{center}
and the equality is achieved if and only if there exists a path of length $d_G(a,b)+1$ from $a$ to $b$ using $x$ in $G$.
\item Otherwise, we have
\begin{center}
  $d_{G/x}(a,b)\geq d_G(a,b)-2$
\end{center}
\end{enumerate}
\end{lemme}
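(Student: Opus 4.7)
The plan is to exploit the tight correspondence between paths in $G$ and in $G/x$. By the definition of the neighborhood contraction, every edge $uv$ of $G/x$ with $u, v \neq x$ is already an edge of $G$, while every edge $xz$ of $G/x$ arises from a length-$2$ path $x, y, z$ in $G$ for some $y \in N(x)$. Conversely, any path $\gamma$ in $G$ using $x$ as an internal vertex contains a subpath $y_1, x, y_2$ with $y_1, y_2 \in N(x)$, whose projection to $G/x$ collapses into the single vertex $x$, reducing the length by exactly $2$.

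To prove (ii), I would lift a shortest path $\gamma' = w_0, \ldots, w_\ell$ in $G/x$ from $a$ to $b$ (so $\ell = d_{G/x}(a,b)$) to a walk in $G$ by replacing each edge of $\gamma'$ incident to $x$ with a length-$2$ detour through a suitable $y \in N(x)$. Since $\gamma'$ is a path, $x$ occurs at most once, and the resulting walk has length at most $\ell + 2$ (with mild bookkeeping when $a$ or $b$ lies in $N(x)$). This gives $d_G(a,b) \leq d_{G/x}(a,b) + 2$.

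For (i), I would apply the same lifting under the additional hypothesis that no shortest $G$-path from $a$ to $b$ uses $x$. If $\gamma'$ avoids $x$, the lift is itself a path in $G$ of length $\ell$ avoiding $x$, giving $d_G(a,b) \leq \ell$. Otherwise the lift is a walk of length $\ell + 2$ passing through $x$; if this walk realised $d_G(a,b)$, it would be a shortest $G$-path through $x$, contradicting the hypothesis. So $d_G(a,b) \leq \ell + 1$. For the equivalence, the ($\Leftarrow$) direction follows by projecting: a $G$-path of length $d_G(a,b)+1$ through $x$ yields a $G/x$-path of length at most $d_G(a,b)-1$. For ($\Rightarrow$), if $d_{G/x}(a,b) = d_G(a,b)-1$ then a shortest $\gamma'$ must pass through $x$ (else its lift would be a $G$-path of length $d_G(a,b)-1$, absurd), and its lift is then a walk of length $d_G(a,b)+1$ through $x$.

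The main obstacle I anticipate is certifying that this lift is an honest simple path, not just a walk: if the two $N(x)$-neighbours $y_1, y_2$ chosen to lift the edges of $\gamma'$ incident to $x$ happen to coincide, one obtains a shortcut walk of length $d_G(a,b)-1$ in $G$, contradicting the definition of distance; this rules out the only dangerous collision, since all other vertices of $\gamma'$ live in $V/x$ and are therefore outside $N(x)$. A secondary bookkeeping nuisance is the case when $a$ or $b$ itself lies in $N(x)$ and is identified with $x$ in $G/x$: the lift then needs a prepended or appended edge, shifting the length bounds by $1$ without affecting the arguments.
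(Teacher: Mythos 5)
Your proof is correct and follows essentially the same route as the paper: both arguments lift a shortest path of $G/x$ back to $G$ by inserting a length-two detour $y_1,x,y_2$ around the (unique) occurrence of $x$, so the lift costs at most two extra edges and passes through $x$, which together with the hypothesis on shortest paths yields both inequalities and the characterization of equality. Your explicit verification that the lift is a simple path (the only possible collision $y_1=y_2$ would produce a too-short walk, and all other vertices lie outside $N(x)$) is a detail the paper's construction of $\mu'=\lambda_1 x_1 x x_n \lambda_2$ glosses over.
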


\begin{proof}
\begin{enumerate}[(i).]
\item Suppose it is not true, then there exists a shortest path $ \lambda $ from $a$ to $b$ in $G/x$, of length $\leq d_G(a,b)-2 $. Let $\mu$ be a path in $G$ such that $\mu/x=\lambda$. This path $\mu$ goes from $a$ to $b$, which ensures that $|\mu|\geq d_G(a,b)\geq |\lambda|+2$. Such a reduction of the length requires that $\lambda$ goes through $x$ in $G/x$. Since $\lambda$ is a shortest path, it can be written as $\lambda=\lambda_1 x
\lambda_2$ where $\lambda_1$ and $\lambda_2$ are both shortest path not going through $x$ in $G/x$. Thus, $\lambda_1$ and $\lambda_2$ have not been changed by the neighborhood contraction around $x$ and they are also shortest path in $G$. As a consequence, $\mu=\lambda_1 x_1 \ldots x_n \lambda_2$ où $\{x_1,\ldots,x_n\} \subseteq N(x)\cup \{x\}$ and $n=|\mu|-|\lambda|+1$. In this way, one can build in $G$ a path $\mu'=\lambda_1 x_1 x x_n \lambda_2$ of length $|\lambda|+2 \leq d_G(a,b)$ going through $x$. Note also that $\mu'$ goes from $a$ to $b$ in $G$, thus $|\mu'|\geq d_G(a,b)$. Finally, $|\mu'|=d_G(a,b)$ thus $\mu'$ is a shortest path from $a$ to $b$ in $G$, going through $x$: it contradicts the initial assumption about $a$,$b$,$x$.

\emph{Case of equality:}
\begin{itemize}
 \item $\Rightarrow$ There exists a shortest path $\lambda$ from $a$ to $b$ in $G/x$, of length $d_G(a,b)-1$. A construction of $\mu'$ similar to the one just above provides a path of length $d_G(a,b)+1$ going from $a$ to $b$ going through $x$ in $G$.
 \item $\Leftarrow$ Let $\mu'$ be a path of length $d_G(a,b)+1$ from $a$ to $b$ going through $x$ in $G$. The one can decompose $\mu$ into $\mu'=\mu_1 x_1 x x_2 \mu_2$, where $x_1$ and $x_2$ are neighbors of $x$. Then $|\mu'/x|\leq |\mu'|-2 = d_G(a,b)-1$.
\end{itemize}
 \item Let $\lambda$ be a shortest path from $a$ to $b$ in $G/x$. As above one can build a path $\mu'$ from $a$ to $b$ in $G$ such that $|\mu'|=|\lambda|+2$. Thus $d_G(a,b) \leq d_{G/x} (a,b)+2$, that is $d_G(a,b)-2 \leq d_{G/x} (a,b)$.
\end{enumerate}
\end{proof}

\begin{thm}\label{th contraction}
 Let $G=(V,E)$ be a connected graph and $x \in V$. Then $R(G)-1 \leq R(G/x) \leq R(G)$.
\end{thm}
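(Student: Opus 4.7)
The plan is to prove the two inequalities of the theorem separately; the upper bound is a short projection argument, while the lower bound uses both Lemma~\ref{autre sommet loin} and Lemma~\ref{longueur chemin}.

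For the upper bound $R(G/x) \leq R(G)$, I would fix a center $c \in C(G)$ and exhibit a vertex of $G/x$ of eccentricity at most $R(G)$. If $c \in V/x$, projecting any $G$-walk (collapsing every visit to $N(x)$ onto $x$) gives a $G/x$-walk of no larger length, so $r_{G/x}(c) \leq r_G(c) = R(G)$. If instead $c \in N(x)$, the vertex $x \in V/x$ does the job: for $b \in V/x \setminus \{x\}$, a $G$-shortest $x$-$b$ path begins with an edge $xy$, $y \in N(x)$, that is contracted away, giving $d_{G/x}(x,b) \leq d_G(x,b) - 1 \leq (d_G(c,b)+1) - 1 \leq R(G)$.

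For the lower bound $R(G/x) \geq R(G) - 1$, I would argue by contradiction: assume $R(G/x) \leq R(G) - 2$ and pick $v \in C(G/x)$. The case $v = x$ is disposed of directly; the symmetric lifting yields $d_G(x, b) \leq d_{G/x}(x, b) + 1 \leq R(G) - 1$ for $b \in V/x \setminus \{x\}$ and $d_G(x,b) = 1$ for $b \in N(x)$, so $r_G(x) \leq R(G) - 1$, contradicting the minimality of $R(G)$. If $v \neq x$, combining Lemma~\ref{longueur chemin}(ii) for $b \in V/x$ with the triangle inequality $d_G(v,b) \leq d_G(v,x) + 1$ for $b \in N(x)$ (and the bound $d_G(v,x) \leq d_{G/x}(v,x)+1$ obtained by lifting a $G/x$-shortest $v$-$x$ path) yields $r_G(v) \leq R(G)$, hence $v \in C(G)$.

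The crucial step is then the application of Lemma~\ref{autre sommet loin} to $v$ along a shortest path that genuinely passes through $x$. Picking $y$ with $d_G(v,y) = R(G)$, one observes that the bound $d_G(v,y) \leq R(G)$ derived above is saturated only when some detour through $x$ is tight, so there is a shortest $v$-$y$ path $\gamma$ through $x$ (by the contrapositive of Lemma~\ref{longueur chemin}(i) if $y \in V/x$, or via $v \to \cdots \to x \to y$ if $y \in N(x)$). Feeding $(v, y, \gamma)$ into Lemma~\ref{autre sommet loin} produces a vertex $z \neq y$ whose shortest $v$-$z$ paths, and in the $z_0$-case of part~(ii) even every path of length $d_G(v, z_0)+1$, all avoid $\gamma \setminus \{v\}$ and hence $x$. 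Lemma~\ref{longueur chemin}(i) then forces $d_{G/x}(v,z) \geq R(G) - 1$: this is immediate when $d_G(v,z) = R(G)$, while in the $z_0$-case the ``equality'' clause of Lemma~\ref{longueur chemin}(i) is excluded precisely by the extra clause of Lemma~\ref{autre sommet loin}(ii). This contradicts $r_{G/x}(v) \leq R(G) - 2$, provided $z$ (resp.\ $z_0$) lies in $V/x$; the alternative of membership in $N(x)$ is ruled out because it would supply a path through $x$ of a length forbidden by Lemma~\ref{autre sommet loin}. The main obstacle is precisely to secure that $\gamma$ goes through $x$: without it, Lemma~\ref{autre sommet loin} gives no separation from $x$, and Lemma~\ref{longueur chemin} alone delivers only the weaker $R(G/x) \geq R(G) - 2$.
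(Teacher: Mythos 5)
Your proof is correct and follows essentially the same route as the paper: the same case analysis (the relevant center being $x$ itself, being central in both $G$ and $G/x$ with or without a shortest path through $x$ to a farthest vertex, or central only in $G/x$), with Lemma~\ref{autre sommet loin} supplying a far vertex whose shortest paths avoid $x$ and Lemma~\ref{longueur chemin} converting that into the needed distance bound in $G/x$. Your reorganization as a contradiction argument, and your explicit check that the witnesses $y$, $z$, $z_0$ do not lie in $N(x)$ (and hence survive the contraction), are sound refinements of the same idea.
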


\begin{proof}
It is obvious that for all $y,z \in V$, we have $d_{G/x}(y,z) \leq d_G(y,z)$ and thus $R(G/x) \leq R(G)$. Let us show that $R(G)-1 \leq R(G/x)$. It is clear that at least one of the following properties is satisfied:
\begin{enumerate}
  \item $x \in C(G)\cap C(G/x)$;
  \item there exists $c \in C(G)\cap C(G/x)$, $c\neq x$, and there exists $y \in V$ and $\gamma$ a shortest path from $c$ to $y$ such that $|\gamma|=R(G)$ and $x \in \gamma$;
  \item $C(G/x) \cap C(G)\neq \emptyset$ and for all $c \in C(G)\cap C(G/x)$ and for all $y \in V$ such that $d_G(c,y)=R(G)$, there is no shortest path from $c$ to $y$ going through $x$ in $G$;
  \item $C(G/x) \cap C(G)=\emptyset$.
\end{enumerate}
Let us prove the inequality by a case study:
\begin{enumerate}
  
  \item Since $x \in C(G)$, there exists $y \in V$ and $\gamma$ a shortest path from $x$ to $y$ such that $|\gamma|=R(G)$. Let us write $\gamma=x a_1 \ldots a_{r-1} y$, with $r=R(G)$. Since $\gamma$ is a shortest path, we can write $\gamma/x=x a_2 \ldots a_{r-1} y$ and we have $d_{G/x}(x,y)=|\gamma/x|$. It implies that $r_{G/x}(x) \geq d_{G/x}(x,y)=R(G)-1$ thus, since $x \in C(G/x)$, we get $R(G/x)=r_{G/x}(x) \geq R(G)-1$.
  
  \item From Lemma \ref{autre sommet loin}, one of the following cases occurs:
    \begin{itemize}
      \item There exists $z \in V$ such that for any shortest path $\mu$ from $c$ to $z$, $\gamma \cap \mu=\sc$ and $d_G(c,z)=R(G)$. Then $x$ does not belong to any shortest path from $c$ to $z$, thus from Lemma \ref{longueur chemin}.(i), $d_{G/x}(c,z) \geq d_G(c,z)-1 = R(G)-1$. Since $c \in C(G/x)$, we get $R(G/x)=r_{G/x}(c)\geq d_{G/x}(c,z)\geq R(G)-1$.
      \item There exists $z \in V$ such that for any shortest path $\mu$ from $c$ to $z$, $\gamma \cap \mu=\sc$ and $d_G(c,z)=R(G)-1$. Moreover, for any path $\mu'$ from $c$ to $z$ of length $|\mu'|=d_G(c,z)+1=R(G)$, we have $\gamma \cap \mu'=\sc$. From Lemma \ref{longueur chemin}.(i), $d_{G/x}(c,z)\geq R(G)-1$. Thus $R(G/x)=r_{G/x}(c)\geq d_{G/x}(c,z)\geq R(G)-1$.
    \end{itemize}
  
  \item Let $c \in C(G)\cap C(G/x)$ and $y \in V$ such that $d_G(c,y)=R(G)$ (the existence of such a vertex $y$ is ensured by the definition of $R(G)$). Then, from Lemma \ref{longueur chemin}.(i), we have the inequality $d_{G/x}(c,y) \geq d_G(c,y)-1 = R(G)-1$ which implies $R(G)=r_{G/x}(c) \geq d_{G/x}(c,y) \geq R(G)-1$.
  
  \item Let $c \in C(G/x)$. Then $r_G(c)\geq R(G)+1$. From Lemma \ref{longueur chemin}.(ii), we have:
     \begin{center}$
        \begin{array}{lr@{\ \leq \ }l}
        \forall y \in V, & d_G(c,y)-2 & d_{G/x}(c,y) \\
        \mathrm{donc} & \max \{d_G(c,y) | y \in V\} -2 &\max \{d_{G/x}(c,y) | y \in V/x \}\\
        \mathrm{donc} & R(G)+1-2 & R(G/x)\\
        \mathrm{donc} & R(G)-1 & R(G/x)\\
        \end{array}$
     \end{center}
             
\end{enumerate}
In all cases, we finally get $R(G/x) \geq R(G)-1$.
\end{proof}

\begin{lemme}\label{contraction depuis le centre}
If $c \in C(G)$, then $R(G/c)=R(G)-1$.
\end{lemme}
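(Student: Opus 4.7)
The plan is to combine Theorem~\ref{th contraction} (which already provides the lower bound $R(G/c) \geq R(G)-1$) with a direct argument establishing the matching upper bound $R(G/c) \leq R(G)-1$. It therefore suffices to exhibit one vertex of $G/c$ whose eccentricity in $G/c$ is at most $R(G)-1$; the natural candidate is the contracted vertex $c$ itself.

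First I would observe that since $c \in C(G)$, we have $d_G(c,y) \leq R(G)$ for every $y \in V$. Now pick any $y \in V/c = V \setminus N(c)$ with $y \neq c$, and consider a shortest path $\gamma = c\, a_1 \, a_2 \ldots a_{k-1}\, y$ from $c$ to $y$ in $G$, where $k = d_G(c,y)$. Since $y \notin N(c) \cup \{c\}$, we have $k \geq 2$, so $a_1 \in N(c)$ is an actual intermediate vertex of $\gamma$ distinct from $y$. Performing the neighborhood contraction around $c$ collapses $a_1$ into $c$, so $\gamma/c = c\, a_2 \ldots a_{k-1}\, y$ is a walk in $G/c$ of length $k-1$. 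Hence $d_{G/c}(c,y) \leq k-1 = d_G(c,y) - 1 \leq R(G) - 1$.

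Taking the maximum over $y \in V/c$ gives $r_{G/c}(c) \leq R(G)-1$, and therefore $R(G/c) \leq r_{G/c}(c) \leq R(G)-1$. Combined with Theorem~\ref{th contraction}, this yields $R(G/c) = R(G)-1$, as claimed.

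I do not expect any real obstacle: the only subtlety is to treat correctly the case $y \in N(c) \cup \{c\}$, which simply disappears after contraction (such $y$ is identified with $c$ in $G/c$ and no longer contributes to eccentricities in $G/c$), so one only needs to bound $d_{G/c}(c,y)$ for $y \in V/c$, and for those vertices the shortest path in $G$ automatically provides the ``free'' length reduction by one.
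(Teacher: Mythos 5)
Your proposal is correct and follows essentially the same route as the paper: both obtain the upper bound by noting that contracting around $c$ shortens every shortest path from $c$ by at least one, so $r_{G/c}(c)\leq r_G(c)-1=R(G)-1$, and both invoke Theorem~\ref{th contraction} for the matching lower bound. Your explicit handling of vertices $y\in N(c)\cup\{c\}$ is a minor point of extra care that the paper leaves implicit.
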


\begin{proof}
If $y \in V$, then $d_{G/c}(c,y)~\leq~d_G(c,y)-1$. As a matter of fact, if $\gamma=c~a_1~\ldots~a_{r-1}~y$ is a shortest path from $c$ to $y$, then $\gamma/c=c~a_2~\ldots~a_{r-1}~y$. Consequently $d_{G/c}(c,y)~\leq~|\gamma/c|~\leq~d_G(c,y)-1$. Looking at the maximum, we get the inequality: $\max~ \{d_{G/c}(c,y)|y~\in~V\}~\leq~\max\{d_G(c,y)|y \in V\}-1$, which is equivalent to $r_{G/c}(c)\leq~r_G(c)-1=R(G)-1$. It implies that $R(G/c)~\leq~R(G)-1$. From Theorem \ref{th contraction}, $R(G/c)~\geq R(G)-1$. It means that $R(G/c)= R(G)-1$.
\end{proof}

\begin{thm}\label{recurrence rayon}
Consider an instance of \FFI, that is an initial connected graph with a 2-coloration, and let $G$ be its reduced graph, that is a bipartite graph with a proper 2-coloration. Then the minimum number of steps required to flood the whole graph $G$ (or equivalently the whole initial graph) is exactly $R(G)$, the radius of the reduced graph $G$.  
\end{thm}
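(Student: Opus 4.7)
The plan is to deduce the theorem directly from the two preceding results, Lemma \ref{contraction depuis le centre} and Theorem \ref{th contraction}, by double inequality. First I would recall the correspondence established in Section \ref{intro}: a flooding sequence corresponds exactly to a sequence of neighborhood contractions, and the graph is flooded precisely when the reduced graph is a single vertex, which is equivalent to having radius $0$. So the theorem reduces to showing that the minimum number of neighborhood contractions needed to bring $R$ from $R(G)$ down to $0$ is exactly $R(G)$.

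For the upper bound, I would argue by induction on $R(G)$. The base case $R(G)=0$ is trivial since $G$ is already a single vertex. For the inductive step, I would pick any $c \in C(G)$ and perform the neighborhood contraction around $c$; by Lemma \ref{contraction depuis le centre}, $R(G/c) = R(G)-1$, so by induction, $G/c$ can be flooded in $R(G)-1$ further steps, yielding a total of $R(G)$ steps. This gives a constructive algorithm: at each step, compute a center of the current reduced graph and contract around it.

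For the lower bound, I would again proceed by induction on the length $k$ of a flooding sequence $x_1,\ldots,x_k$ leading to a single vertex. Theorem \ref{th contraction} guarantees that each single neighborhood contraction can decrease the radius by at most $1$, i.e. $R(G/x) \geq R(G)-1$ for any $x \in V$. Iterating this inequality along the sequence gives $0 = R(G/x_1/\cdots/x_k) \geq R(G) - k$, hence $k \geq R(G)$.

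The argument is essentially a clean packaging of the two preceding results, so I do not expect any real obstacle; the only slightly delicate point is making sure the base case is correctly handled (radius $0$ in a connected graph forces a single vertex, and conversely the fully flooded graph corresponds to a one-vertex reduced graph), together with the observation that the constructive upper-bound argument doubles as a polynomial algorithm since each center can be computed in $O(nm)$ time, giving a global $O(n^2 m)$ algorithm for \FFI.
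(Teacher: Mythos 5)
Your proof is correct and follows essentially the same route as the paper: the upper bound via Lemma \ref{contraction depuis le centre} (contracting at a center decreases the radius by exactly one) and the lower bound via Theorem \ref{th contraction} (any contraction decreases the radius by at most one). The only cosmetic difference is that you recompute a center at each step, whereas the paper observes that the same center vertex $c$ can be reused throughout; both variants are valid.
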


\begin{proof}
From Lemma \ref{contraction depuis le centre}, $R(G)$ is an upper bound of this number of steps: one can flood the whole graph by choosing a vertex $c$ in the center $C(G)$, and by repeatedly choosing this vertex to perform each flooding step. The initial graph is all flooded when one reaches a singleton graph, that is a graph of radius~0. Each step decreases the radius by exactly 1, it requires $R(G)$ steps. 

Now Theorem \ref{th contraction} shows whatever the choice of the vertex for a flooding, the radius decreases by at most 1. It implies that whatever the sequence of vertices chosen for the flooding operations, at least $R(G)$ steps are required to reach radius~0.
\end{proof}

\begin{cor}\label{corollaire 2FFI}
Consider an instance of \FFI, that is a connected graph with $n$ vertices and $m$ edges with a 2-coloration. Then a sequence of vertices of minimum length which floods the whole graph can be computed in time $O(nm)$. In the particular case studied in \cite{Arthur,Anglais} where the graph is a square grid $N \times N$, it can be computed in time $O(N^4)$.
\end{cor}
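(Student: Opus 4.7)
The plan is to exhibit an explicit algorithm, justify its optimality using the results already proved, and bound its running time.

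First I would compute the reduced graph of the input in linear time (as noted in Section \ref{intro}); call it $G$. By Theorem \ref{recurrence rayon}, the optimal number of flooding steps equals $R(G)$, so it suffices to output any sequence of $R(G)$ vertices whose associated neighborhood contractions reduce $G$ to a single vertex.

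My algorithm is simply: pick a center $c \in C(G)$ and return the sequence of length $R(G)$ that repeatedly floods from the zone containing $c$. Running a BFS from every vertex yields the eccentricity of each vertex in total time $O(nm)$, hence both $R(G)$ and some $c \in C(G)$. For correctness, I would observe that $c$ remains in the center after contraction: by Lemma \ref{contraction depuis le centre}, $R(G/c) = R(G)-1$, and since $d_{G/c}(c,y) \leq d_G(c,y)-1$ for every $y$ (the argument used inside that lemma's proof), we have $r_{G/c}(c) \leq R(G)-1 = R(G/c)$, so $c \in C(G/c)$. A short induction then shows that iterating the contraction at $c$ keeps $c$ in the center and decreases the radius by exactly $1$ at each step, so after $R(G)$ contractions the current graph is a single vertex, i.e.\ fully flooded.

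The main potential pitfall is running time: naively recomputing a center after every contraction would cost $O(n^2 m)$. The observation that $c$ stays central at every step eliminates this overhead, and the overall cost is dominated by the single initial center computation, giving $O(nm)$. For an $N \times N$ square grid we have $n \leq N^2$ and $m = O(N^2)$, so the bound specializes to $O(N^4)$, as claimed.
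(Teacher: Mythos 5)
Your proof is correct and follows essentially the same route as the paper: reduce the graph in linear time, compute the radius and a center by running a BFS from every vertex in $O(nm)$, and output the sequence that repeatedly floods at that center, whose optimality is exactly Theorem~\ref{recurrence rayon}. Your explicit check that $c$ remains in $C(G/c)$ (via $r_{G/c}(c)\leq R(G)-1=R(G/c)$) is a detail the paper leaves implicit in the proof of Theorem~\ref{recurrence rayon}, and it is a welcome addition.
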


\begin{proof}
One first has to compute the reduced graph. It can be easily done in $O(m)$ time as mentioned before. The reduced graph has $\leq n$ vertices and $\leq m$ vertices. Then one only has to compute its radius in $O(nm)$ time. In the particular case of $N \times N$ square grid, we have $n=N \times N$ and $m=4 N^2-4N$.
\end{proof}


\begin{thebibliography}{1}

\bibitem{jeuFI}
Jeu flood-it.
\newblock http://floodit.appspot.com/.

\bibitem{jeuMV}
Jeu mad virus.
\newblock http://www.bubblebox.com/play/puzzle/539.htm.

\bibitem{Arthur}
D.~Arthur, R.~Clifford, M.~Jalsenius, A.~Montanaro, and B.~Sach.
\newblock The complexity of flood filling games.
\newblock Arxiv, Jan 2010.

\bibitem{Anglais}
D.~Arthur, R.~Clifford, M.~Jalsenius, A.~Montanaro, and B.~Sach.
\newblock The complexity of flood filling games.
\newblock In {\em Proceedings of the Fifth International conference on Fun with
  Algorithms}, Juin 2010.
\newblock LNCS 6099, Springer.

\bibitem{Aurelie-hal}
A.~Lagoutte.
\newblock Jeux d'inondation dans les graphes.
\newblock Technical report, ENS Lyon, August 2010.
\newblock http://hal.archives-ouvertes.fr/hal-00509488/fr/.

\end{thebibliography}

\end{document}